\newtheorem{theorem}{Theorem}[section]
\newtheorem{proposition}[theorem]{Proposition}
\newtheorem{assumption}[theorem]{Assumption}
\newtheorem{definition}[theorem]{Definition}
\begin{document}

%
\title{smartSDH: An Experimental Study of Mechanism-Based Building Control}
%
%
%

\author{Ioannis C.\ Konstantakopoulos,
        Kristy A.\ Hamilton, Yashaswini Murthy,
        Tanya Veeravalli, Costas Spanos,~\IEEEmembership{Fellow,~IEEE,} and Roy Dong 
\thanks{Ioannis C.\ Konstantakopoulos is with Amazon.com; Kristy A.\ Hamilton is with  Department of Communication at University of California, Santa Barbara, CA, USA; Yashaswini Murthy, Tanya Veeravalli and Roy Dong are with the Department of Electrical and Computer Engineering, University of Illinois, Urbana-Champaign, IL, USA; Costas Spanos is with the Department of Electrical Engineering and Computer Sciences, University of California, Berkeley, CA, USA.}
}

\maketitle

\begin{abstract}
As Internet of Things (IoT) technologies are increasingly being deployed, situations frequently arise where multiple stakeholders must reconcile preferences to control a shared resource. We perform a 5-month long experiment dubbed `smartSDH' (carried out in 27 employees' office space) where users report their preferences for the brightness of overhead lighting. smartSDH implements a modified Vickrey-Clarke-Groves (VCG) mechanism; assuming users are rational, it incentivizes truthful reporting, implements the socially desirable outcome, and compensates participants to ensure higher payoffs under smartSDH when compared with the default outside option (i.e., the option chosen in the absence of such a mechanism). smartSDH assesses the feasibility of the VCG mechanism in the context of smart building control and evaluated smartSDH's effect using metrics such as light level satisfaction, incentive satisfaction, and energy consumption. Although previous studies on the theoretical aspects of the mechanism indicate user satisfaction, our experiments indicate quite the contrary. We found that the participants were significantly less satisfied with light brightness and incentives determined by the VCG mechanism over time. These data suggest the need for more realistic behavioral models to design IoT technologies and highlights difficulties in estimating preferences from observable external factors such as atmospheric conditions.
\end{abstract}

\begin{IEEEkeywords}
smart building control; mechanism design; experimental validation; Internet of Things (IoT); Vickrey-Clarke-Groves (VCG) mechanism
\end{IEEEkeywords}

%
\IEEEpeerreviewmaketitle

\section{Introduction}
\label{sec:Intro}
%
%
%
%
\IEEEPARstart{R}{ecent} advances in sensing, actuation, and communication technologies have allowed an unprecedented level of control over the behavior of our devices and an unprecedented fidelity of information about the state of our systems. This has found application in a wide variety of `smart' decision making processes, including the management of parking spaces, the monitoring of water usage, and the energy-efficient operation of instrumented homes and offices \cite{shah2016survey}.

However, as the set of application domains for these `smart' algorithms grow, we find ourselves in scenarios where multiple stakeholders care about the state of a shared resource. For example, consider the control of lights in an office space. At home, one can easily adjust the light settings in a non-problematic way, as they are dictator of their own home's illumination. In contrast, for an office setting where multiple co-workers have different preferences about the intensity of lights, the definition of what `should' happen is not as clear. Put another way, as we start to look at applications with a shared resource, \emph{we need to find ways to reconcile the opinions of multiple users.}

Building on our previous example, we can think of a mechanism that asks each person: ``How bright do you want the lights?'' Suppose there are two office occupants and our mechanism simply implements the average of the two votes. Furthermore, suppose your co-worker voted for 100\% intensity, and you wanted the lights at 75\% intensity. In this situation, you would vote for for 50\% intensity, as the mechanism would implement the average, which would be your desired lighting level. 
This simple thought experiment reveals a crucial point: humans have an incentive to \emph{strategically} report data to push mechanisms to their selfishly desired outcomes.

This problem is often studied in economics under the title of `mechanism design'. The Vickrey-Clarke-Groves (VCG) mechanism~\cite{vickrey:1961aa,groves:1973aa} is one of the most celebrated achievements in this area.For more information about VCG mechanisms, we refer the reader to the wonderfully accessible~\cite{Nisan2007}. VCG mechanisms have many desirable properties. First, VCG mechanisms implement the most socially beneficial outcome. Second, all participants are incentivized to truthfully reveal their preferences. Third, the mechanism can be designed such that every participant is better off participating in the mechanism, either due to the outcome chosen or due to some endogenously determined payment.

The most common application of VCG mechanisms is in auction-like settings. In these settings, people submit bids to `win' a good. This is \emph{not} the case here, and we briefly elucidate some of the finer nuances of our application at a higher level. 

The VCG mechanism can be applied to any setting with a set of mutually exclusive outcomes experienced by a set of agents. In auctions, the different, mutually exclusive outcomes are which bidder wins the good (there can be a single winner amongst the bidders). In our application, the outcomes are the 3 possible light settings. In its full generality, the VCG mechanism simply takes into account user preferences and outputs a socially-optimal outcome after having reconciled these reported preferences. In smartSDH, users bid for an optimal light setting. The chosen light setting may not be the optimal light setting for every user, yet, due to the shared nature of the office space, only one light setting can be chosen (i.e. the 3 outcomes are mutually exclusive) and every user experiences the effect of the outcome. Depending on the socially-optimal outcome, the mechanism then issues payments to users based on the preferences of all other users, incentivising truthful reporting of their preferences.   

While the VCG mechanism has many desirable properties when all participants act as rational agents, the literature of real-world implementations outside the context of auctions is limited, as we will discuss further in Section~\ref{sec:back}. Therefore, the purpose of this experiment is to 1) assess the feasibility of the modified VCG mechanism in the context of smart building control in group settings and 2) evaluate the effect of smartSDH on light level satisfaction, incentive satisfaction, and energy consumption across time. These main research objectives are delineated further in Section \ref{sec:model}.

Furthermore, due to the application domain, minor modifications were required to implement VCG mechanisms in a real-time setting where users cannot walk away and the outside option is more complicated than `non-participation'; more details can be found in Section~\ref{sec:model}.

The rest of the paper is organized as follows. In Section~\ref{sec:back}, we discuss the existing literature most relevant to our work, and contextualize our contribution. In Section~\ref{sec:model}, we present the mathematical formalism for modeling the decision making processes of users and the formulation of the mechanism implemented by smartSDH. In Section~\ref{sec:method}, we outline our experimental setup. In Section~\ref{sec:results}, we present the statistical analysis of our experimental data. In Section~\ref{sec:discuss}, we discuss and interpret the results, as well as provide closing remarks summarizing our discoveries and potential avenues for future work.

\section{Background and Related Work}
\label{sec:back}

In this section, we discuss the existing literature and works most closely related to ours. Broadly speaking, this literature can be broken down into two categories: works on building control and works on mechanism design.

\subsection{Building control}

Nearly 50\% of the energy consumed in the U.S. is accounted for by residential and commercial buildings~\cite{mcquade2009}, and well-designed algorithms for the control of lighting and heating-ventilation-air-conditioning systems in buildings promise significant benefits for the stability and efficiency of our power grid. 
Much of the research in building control focuses on finding algorithms that take into account the uncertainty about external factors and operating conditions, while satisfying the constraints introduced by user preferences~\cite{maAnderson2011ACC,Aswani:2012kx}.

Recent work in smart building infrastructures incorporate occupant preferences about thermal comfort~\cite{karmann2018percentage}, satisfaction/well-being~\cite{frontczak2012quantitative}, lighting comfort~\cite{baron1992effects}, acoustical quality~\cite{ryherd2008implications}, indoor air quality~\cite{sundell2011ventilation}, indoor environmental monitoring~\cite{jin2018automated}, privacy~\cite{jia2016privacy}, and productivity~\cite{wargocki2008indoor}, while simultaneously optimizing for energy efficiency and agile connectivity with the grid. Alongside these, ontological modelling has emerged as one of the latest avenues pursued in the domain of resource allocation through smart technologies~\cite{Nezami2019ontology,Seydoux2016ontology}. 

The most relevant work closely related to the experimental study presented in this paper concerns the theoretical study of implementing the VCG mechanism for the purpose of moderating the users' thermal comfort across multiple zones within a building \cite{Gupta2018Article}. The primary approach of \cite{Gupta2018Article} involves modelling of the users' comfort as a function and incentivising the users to aid in truthful representation of their preferences. Although their simulations and theoretical properties support the utility of such an approach in real life, it's lacking in strong experimental validation. We aim to bridge this gap by presenting a thorough experimental study of such a mechanism in real world. Consequently, our results indicate the need to resort to other mechanisms as the user satisfaction over time in reality does not align with the theoretically predictions of the VCG mechanism. 

Prior work has attempted to correct the information imbalance by modeling occupant preferences by the means of a multi-agent system~\cite{boman:1998aa}. An initial control policy is created from both the building manager's and the occupants' preferences. Then, a rule engine finds a compromise in the system, and iteratively performs a compromise. 
In another related work, ~\cite{clearwater1995auction} demonstrates an auction-based apparatus where each room in an office building makes a bid between the desired temperature and the actual temperature.

With regards to other building control scenarios, ~\cite{Ploennigs2017FromSM} carry out an experiment with sensors placed in buildings and use machine learning and statistical learning techniques with the information from the sensors to diagnose building operation problems related to energy usage and occupant comfort. 
Along a similar vein, ~\cite{egedorf2017adverse} develop a methodology using causal learning statistics to model occupants' thermal discomfort in smart buildings using temperature sensors. A more classical approach to understanding HVAC settings in a building using optimal control is explored in ~\cite{wang2008supervisory}. Here, a variety of optimal control and optimization methods are analyzed in the context of energy efficiency and/or cost-efficient control of HVAC systems, many of which utilize variations of different mechanisms.

A common goal for those implementing building control is energy-efficiency. ~\cite{chen2009design} propose a smart software system that balances the requests from different stakeholders (owners, operators, etc.) and also considers the preferences and locations of individual occupants. The framework enables building control while reducing energy consumption and maximizing occupant comfort.

In contrast to the above methods for human-building interaction, we present an experimental study of a different approach. At the core of our approach is the design of a mechanism that can find the shared conditions (e.g. lighting and HVAC) that fairly account for all occupants' preferences, and provides rewards to those who are willing to compromise on the shared conditions for their co-workers. Additionally, our mechanism can pass on some of the building manager's energy-efficiency incentives onto the building occupants, so they can also experience the benefit of energy-efficiency programs. Our framework is centered around a real-time application of a modified VCG mechanism; occupants are asked for their preferences on the shared source, and this information is used to calculate the {social optimum} and allocate rewards. The rewards are calculated such that occupants can only benefit from the mechanism, and none of the occupants have any incentive to misrepresent their preferences.

\subsection{Mechanism design}
\label{sec:mec_des}

Game-theoretic models have been widely used to model selfish agent behavior in a wide variety of applications, ranging from traffic network flow allocation to smart grid optimization. There is one thing common in the core of these problems: each user is trying to maximize a personal utility function (which can be function of various internal and external factors) subject to personal constraints. Mechanism design is an approach to strategically design incentives for these users towards an objective. 
Most game-theoretic analyses rely on the assumption that the utility functions are known a priori, which may not be a realistic assumption to make in real-world situations; this is especially true in the case of the energy industry where there are many `noise' variables affecting how well a building can operate. However there have been some attempts at user preference modelling in intelligent environments~\cite{Juan2019User}, some delving into argumentation methods in order to formalise commonsense reasoning~\cite{Teze2020User}.

While VCG mechanisms have many nice theoretical properties, it is rare to see their supposed effect in the real-world integrated with decision-making. There are some existing works in the body of literature that contribute to the simulation and/or deployment of VCG mechanisms \cite{Gupta2018Article}. For instance, the mechanism is applied to numerical examples of wholesale energy markets to achieve social optimality by incentivizing truth-telling~\cite{xu2017efficient}. Similarly, there exists compelling work on simulating the VCG mechanism results in optimal energy load scheduling where social welfare is also measured monetarily~\cite{samadi2011optimal}. With regards to human computation, crowdsourcing tasks and designing optimal pricing policies also require incentive-compatible mechanism design; strong theoretical guarantees are shown with regards to case studies done using Mechanical Turk~\cite{goel14mechanismdesign}.

We would like to emphasize that to the best of our knowledge, our work is the first to use VCG mechanisms in a real-world building control application. It is also one of the first papers to implement VCG mechanisms in real-world settings outside of the classical domains of auctions. We hope that this literature develops more in the coming years, as new technologies will have to reconcile the opinions of many users, and an understanding of how humans interact with different mechanisms is crucial for these systems to operate as desired.

\section{Model}
\label{sec:model}

In this section, we briefly introduce the mathematical formalism for modeling the decision making processes of users, as well as the formulation of our implemented mechanism. To recap, every user submits his/her bid for an optimal light setting, and the mechanism computes the socially acceptable light intensity although it need not be optimal for every individual user. In later sections, we'll use data from our experiments to explore the faithfulness of these models.

\subsection{User model}

First, we formally outline the assumptions of our human models. We begin by introducing some notation. Let \(\mathcal{I}\) denote the finite set of participants, and let \(\mathcal{X}\) denote the finite set of outcomes. Recall that our mechanism will choose one of the outcomes in $\mathcal{X}$ to implement. Let $n = |\mathcal{I}|$ denote the number of participants, and $m = |\mathcal{X}|$ denote the number of outcomes. Without any loss of generality, we'll let the labels of users be $\mathcal{I} = \{1,2,\dots,n\}$.

In our smartSDH experiment, the number of outcomes $m = 3$, corresponds to three different light settings: (`Normal', `Bright', `Very Bright') and we have $n = 27$ participants.

In order to impart a mathematical structure to our model, it is necessary to define certain notions.
\begin{definition}[Type of the user]
For each user $i \in \mathcal{I}$ and outcome $x \in \mathcal{X}$, let $\lambda_x^i \in \mathbb{R}$ denote the cost user $i$ incurs when the chosen outcome is $x$. 
We refer to the vector $\lambda^i = (\lambda_x^i)_{x \in \mathcal{X}} \in \mathbb{R}^m$ as the type of the user.
\end{definition}

Our mechanism will choose an outcome $x$ and issue each user $i \in \mathcal{I}$ a payment $p_i$. 
We wish to model people's preferences not only across outcomes $x$, but across outcome-incentive bundles $(x,p)$. This allows us to compensate users for compromising on the outcome; for example, if we choose an outcome $x$ that is particularly distasteful to a user $i \in \mathcal{I}$ (i.e. $\lambda_x^i$ is very large), then we can compensate by giving them a very large gift card (i.e. choosing $p_i$ to be very large as well). This concept is better illustrated as the utility function of the user. 

\begin{definition}[Utility Function]
Each user \(i \in \mathcal{I}\) makes decisions to maximize their utility function, which is parameterized by their type $\lambda^i$ and takes as inputs an outcome-incentive bundle $(x,p)$:
\begin{equation}
u(x, p; \lambda^i) = p - \lambda_x^i  
\label{eq:userform}
\end{equation}
\end{definition}

The interpretation is that when a user's type is $\lambda^i$, the chosen outcome is $x$, and the incentive payment received is $p$, their utility is $u(x,p;\lambda^i)$.

We are now ready to introduce our assumptions on our human decision-making model.

\begin{assumption}[User model]
\label{ass:user_model}
The user's type $\lambda^i$ captures all the relevant information for their decision making process and the utility function is given by~\ref{eq:userform}. 
\end{assumption}

Additionally, we will use the notation \(\lambda = (\lambda^1,\dots,\lambda^n)\) to denote the types of \emph{all} users. As per common game-theory convention, we will use the notation \(\lambda^{-i} = (\lambda^1,\dots,\lambda^{i-1},\lambda^{i+1},\dots,\lambda^n)\) to denote the types of all users other than user \(i\). Hence \(f(\lambda) = f(\lambda^i, \lambda^{-i})\). 

\begin{assumption}[Informational structure]
\label{ass:info_structure}

We assume that each user $i \in \mathcal{I}$ knows their own type $\lambda^i$. Additionally, no one other than $i$ knows $\lambda^i$. This includes other users $j \neq i$ and the mechanism designer.

\end{assumption}

Assumption~\ref{ass:info_structure} sits at the core of what we wish to model. When asked about their preferences, we assume users will answer based on their desired outcome, not their true preferences. Similar to the motivating example in Section~\ref{sec:Intro}, where the user suggests a wrong preference in an averaging mechanism, to obtain a final result of his preference, he might report a wrong type with skewed costs as well to bias the VCG output in his favour. Our goal is to design a mechanism that can calculate the outcome that is most socially desirable for all participants, \emph{without} access to information about the user types $\lambda$. In particular, our mechanism will ask users to report their types; we let their reported values be denoted $\widehat \lambda$. Our mechanism must decide payments and the outcome based only on these reported values $\widehat \lambda$.

\subsection{Vickrey-Clarke-Groves mechanisms}

Given the user models, we can analyze whether mechanisms will achieve desired outcomes. In this subsection, we'll define VCG mechanisms, outline the desirable properties these mechanisms have in theory, and discuss extensions required for our application.

First, let us define the socially desirable outcome.

\begin{definition}[Social welfare] The social welfare of an outcome \(x \in \mathcal{X}\) for a set of users of type \(\lambda = (\lambda^1,\dots,\lambda^n)\) is given by:
\begin{equation}
  s(x,\lambda) = -\sum_{i \in \mathcal{I}} \lambda_x^i  
\end{equation}
We define the social optima as the maximizers of the social welfare:
\begin{equation}
s^*(\lambda) = \arg\max_{x \in \mathcal{X}} s(x,\lambda)
\end{equation}

\end{definition}

Intuitively, the social welfare function calculates the sum total of everyone's utilities, excluding the incentive payments. In other words, the social welfare function assesses how good each outcome $x \in \mathcal{X}$ is for all participants without payment compensation. 

Now, we'll provide a definition of a class of mechanisms known as Vickrey-Clarke-Groves mechanisms, and outline some of their theoretically desirable properties as well as nuances in our experiment's particular implementation.

\begin{definition}[Vickrey-Clarke-Groves mechanisms]
Given the reported types \(\widehat \lambda\), a \emph{Vickrey-Clarke-Groves mechanism} chooses the outcome that maximises the social welfare:
\begin{equation}
  f(\widehat \lambda) = \arg\max_{x \in \mathcal{X}} s(x,\widehat \lambda)  
\end{equation}
Additionally, for each user $i \in \mathcal{I}$, it pays:
\begin{equation}
 p_i(\widehat \lambda) =
h_i(\widehat \lambda^{-i}) - \sum_{j \neq i} \widehat \lambda_{f(\widehat \lambda)}^j   
\end{equation}
Here, \(h_{i}\) is an arbitrary function which does not depend on the reported value $\widehat \lambda^i$ of user $i$. The form of \(h_{i}\) determines the individual optimality of the chosen output with respect to an output which is obtained in the absence of such a mechanism. Depending on the modelling of utilities in various scenarios and constraints over the types of users, \(h_{i}\) can be wisely chosen to ensure the optimality of the VCG mechanism.    

\end{definition} 

VCG mechanisms choose the outcome that is socially optimal for the reported types. Furthermore, it issues payments to every user \(i\) based on the utility of all other users from the chosen outcome. Intuitively, this means that once the payments are incorporated, the utility function of every user essentially becomes the social welfare function along with payment compensation. 
Formally, if we plug in $p_i$ to the utility function of user $i$, we get:
\begin{equation}
u(f(\widehat \lambda), p_i(\widehat \lambda); \lambda^i) = h_i(\widehat \lambda^{-i}) - \sum_{i \in \mathcal{I}} \widehat \lambda_{f(\widehat \lambda)}^i = h_i(\widehat \lambda^{-i}) + s(x, \lambda)   
\end{equation}
We next introduce the notion of incentive compatibility, in order to incentivise truthful reporting of preferences. This is an important notion in the implementation of the VCG mechanism, as it sets the rationale for the users to not lie about their types in order to maximise their utility. 
\begin{proposition}[Incentive compatibility]
\label{prop:ic}
For every set of user types \(\lambda\), and for every user \(i\) and reported type \(\widehat \lambda^i\), we have:
\begin{equation}
\label{eq:inc_comp}
u(f(\lambda), p_i(\lambda);\lambda^i) \geq
u(f(\widehat \lambda^i, \lambda^{-i}), p_i(\widehat \lambda^i, \lambda^{-i});\lambda^i)
\end{equation}

\end{proposition}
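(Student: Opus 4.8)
The plan is to substitute the VCG payment directly into user $i$'s realized utility and observe that it collapses, up to an additive term that does not depend on $i$'s report, into the true social welfare $s(\cdot,\lambda)$ of whatever outcome the report induces. Once this reduction is in hand, incentive compatibility is immediate: reporting truthfully makes the mechanism select exactly the outcome maximizing the true social welfare, so no deviation can do better. Throughout I would hold $\lambda^{-i}$ fixed and exploit that, on both sides of \eqref{eq:inc_comp}, the other users report truthfully, so their reported costs $\widehat\lambda^j_x$ coincide with their true costs $\lambda^j_x$ for all $j \neq i$.

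First I would fix an arbitrary report $\widehat\lambda^i$ of user $i$, write $\widehat\lambda = (\widehat\lambda^i,\lambda^{-i})$, and let $x = f(\widehat\lambda)$ denote the induced outcome. Substituting the payment $p_i(\widehat\lambda) = h_i(\lambda^{-i}) - \sum_{j\neq i}\lambda^j_x$ into $u(x,p;\lambda^i)=p-\lambda^i_x$, the key algebraic step is to merge the payment's sum $-\sum_{j\neq i}\lambda^j_x$ with the utility's own cost term $-\lambda^i_x$, reconstituting the full sum over all of $\mathcal{I}$:
\[
u\bigl(f(\widehat\lambda),\, p_i(\widehat\lambda);\, \lambda^i\bigr)
= h_i(\lambda^{-i}) - \sum_{j\in\mathcal{I}}\lambda^j_x
= h_i(\lambda^{-i}) + s\bigl(f(\widehat\lambda),\,\lambda\bigr).
\]
This identity is the heart of the argument: after the payment is incorporated, user $i$'s realized utility is, up to the constant $h_i(\lambda^{-i})$, precisely the \emph{true} social welfare evaluated at the outcome their report triggers.

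Because $h_i$ depends only on $\lambda^{-i}$ and not on $\widehat\lambda^i$, maximizing user $i$'s utility over their report is equivalent to choosing a report that drives the mechanism to the outcome maximizing $s(\cdot,\lambda)$. Truthful reporting accomplishes exactly this: taking $\widehat\lambda^i=\lambda^i$ makes the reported profile equal the true profile $\lambda$, so the mechanism selects $f(\lambda)=\arg\max_{x}s(x,\lambda)$, yielding utility $h_i(\lambda^{-i})+\max_{x}s(x,\lambda)$. Any other report induces some outcome with social welfare at most $\max_{x}s(x,\lambda)$, hence utility no larger. Applying the displayed identity once with $\widehat\lambda^i=\lambda^i$ (the left-hand side of \eqref{eq:inc_comp}) and once with the deviating report (the right-hand side) and comparing the two $s(\cdot,\lambda)$ terms gives the claimed inequality.

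I expect no serious obstacle, as the argument is essentially algebraic; the only point requiring care is the bookkeeping in the reconstitution step—checking that the $j\neq i$ index in the payment sum together with the single $i$ term from the utility telescopes correctly into $\sum_{j\in\mathcal{I}}$, and that the identical term $h_i(\lambda^{-i})$ appears unchanged for both the truthful and the deviating report so that it cancels from the comparison. The structural fact being exploited is that the VCG payment is engineered so that each agent internalizes the externality they impose on the others, thereby aligning every agent's private incentive with the social welfare.
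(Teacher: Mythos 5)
Your proof is correct and is exactly the standard argument: the paper itself does not prove this proposition but defers to \cite{Nisan2007}, and the canonical proof given there is precisely your reduction of the post-payment utility to $h_i(\lambda^{-i}) + s(f(\widehat\lambda),\lambda)$ followed by optimality of $f(\lambda)$ for $s(\cdot,\lambda)$. The bookkeeping you flag (merging $-\sum_{j\neq i}\lambda^j_x$ with $-\lambda^i_x$, and using that the truthful $\lambda^{-i}$ makes reported and true costs coincide for $j\neq i$) is handled correctly.
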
 
The proof for Proposition~\ref{prop:ic} can be found in~\cite{Nisan2007}.

Incentive compatibility should be interpreted as follows: \(\lambda\) denotes the true types of all the users, and \(\hat{\lambda^{i}}\) is a possible lie which user \(i\) could report. The left-hand side of Equation~\ref{eq:inc_comp} is the utility user \(i\) gains when everyone is truthful. The right-hand side is the utility \(i\) derives when everyone except \(i\) truthfully responds. Equation \ref{eq:inc_comp} states that any such deviations will only lower user \(i\)'s utility. Thus, truthful reporting by all users forms a Nash equilibrium. Note that this utility function is still parameterized by the true type $\lambda^i$, as this determines the utility the user actually experiences.

Next, we cover another desirable property for VCG mechanisms: individual rationality. 
In the typical applications of the VCG mechanism, users can opt-out and choose an outside option. For example, in auctions, a participant can simply choose to not bid at all and walk away. They will receive no utility from receiving a good, and they will lose no utility from paying the auctioneer.

However, we note that in our smartSDH experiment, this is not as well-defined. Even if a participant does not report a vote to our mechanism, they must sit in their office space and experience the light chosen by our mechanism. This warrants a minor modification of the typical individual rationality constraint, since the outside option, becomes type-dependent, which we outline here. 
To understand this deviation from original VCG mechanism, we introduce the notion of nominal outcome which is analogous to the "outside option".

\begin{assumption}[Nominal outcome] 
\label{ass:nom}
Without any mechanism in place, there is a \emph{nominal outcome} \(x_0 \in \mathcal{X}\) that would occur. For each user, their \emph{outside option} is \(u(x_0,0;\lambda^i)\), which is the utility they get from the nominal outcome and no awarded points. It is important to note that $\lambda^i_{x_0}$ need not necessarily be equal to zero for all users.The modified VCG mechanism yields an output which is individually optimal with respect to the nominal outcome, if it satisfies ex-post individual rationality. 
A VCG mechanism has \emph{ex-post individual rationality} if for all user types \(\lambda\) and any user \(i\):
\begin{equation}
\label{eq:ex_post_ir}
u(f(\lambda),p_i(\lambda);\lambda^i) \geq
u(x_0,0; \lambda^i)
\end{equation}
\end{assumption}

The interpretation is that the nominal outcome (outside option) would be the chosen outcome in the absence of our mechanism. When we enforce ex-post individual rationality, this means that every user is better off when the mechanism is implemented, as opposed to when the mechanism does not exist at all. 

It's important to note that this is a deviation from typical applications of the VCG mechanism. In smartSDH, a user cannot `walk away' from their shared office space; rather, we have to ensure that their utility increases as a result of our mechanism's deployment. 
From a technical perspective, the important distinction is that the utility of the outside option depends on the user's type; this is what necessitates modification at a theoretical level.

\begin{assumption}[Bounded user utilities]
\label{ass:bounded}
The types of all users are bounded between \(0\) and \(\lambda_{max}\), i.e. \(\lambda_x^i \in [0,\lambda_{max}]\) for all \(i \in \) and \(x \in \mathcal{X}\).

\end{assumption}

\begin{proposition}[Existence of an ex-post individually rational mechanism]
\label{prop:ex_post_existence}
Using \(h_i(\widehat \lambda^{-i}) = n\lambda_{max}\), the VCG mechanism is ex-post individually rational.

\end{proposition}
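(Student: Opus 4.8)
The plan is to verify the ex-post individual rationality inequality~\eqref{eq:ex_post_ir} by directly substituting the truthful reports $\widehat\lambda = \lambda$ into both sides and reducing the claim to a comparison of two explicit quantities. Note that the statement is phrased in terms of $f(\lambda)$ and $p_i(\lambda)$, so there is no strategic reasoning to do here; the work is entirely a computation combined with the bounds from Assumption~\ref{ass:bounded}.

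First I would expand the left-hand side. Using the quasilinear utility of Assumption~\ref{ass:user_model} together with the VCG payment rule specialized to $h_i(\widehat\lambda^{-i}) = n\lambda_{max}$, the utility user $i$ receives under the mechanism is
\[
u(f(\lambda), p_i(\lambda); \lambda^i) = p_i(\lambda) - \lambda_{f(\lambda)}^i = n\lambda_{max} - \sum_{j\neq i}\lambda_{f(\lambda)}^j - \lambda_{f(\lambda)}^i.
\]
The key algebraic observation is that the isolated term $-\lambda_{f(\lambda)}^i$ absorbs into the sum over $j\neq i$ to form a single sum over all of $\mathcal{I}$, yielding $n\lambda_{max} - \sum_{j\in\mathcal{I}}\lambda_{f(\lambda)}^j$. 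Then I would bound this from below using Assumption~\ref{ass:bounded}: since each of the $n$ costs satisfies $\lambda_{f(\lambda)}^j \le \lambda_{max}$, their sum is at most $n\lambda_{max}$, so the entire expression is nonnegative.

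For the right-hand side, which is the outside option of Assumption~\ref{ass:nom}, I would simply compute $u(x_0, 0; \lambda^i) = -\lambda_{x_0}^i$, which is nonpositive because Assumption~\ref{ass:bounded} also guarantees $\lambda_{x_0}^i \ge 0$. Chaining the two bounds gives $\text{LHS} \ge 0 \ge \text{RHS}$, which is exactly~\eqref{eq:ex_post_ir}. I expect no real obstacle: the argument is independent of which outcomes $f(\lambda)$ and $x_0$ happen to be, since the constant $h_i = n\lambda_{max}$ is chosen precisely to dominate the worst-case total reported cost, while the nonnegativity of types forces the outside option to be no better than zero. This is the sense in which the type-dependence of the outside option, flagged earlier as the reason for modifying the standard individual rationality constraint, ultimately causes no difficulty.
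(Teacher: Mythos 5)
Your proposal is correct and follows essentially the same route as the paper's proof: show the mechanism utility equals $n\lambda_{max} - \sum_{j\in\mathcal{I}}\lambda_{f(\lambda)}^j \geq 0$ via Assumption~\ref{ass:bounded}, and show the outside option $-\lambda_{x_0}^i \leq 0$, then chain the two bounds. You merely spell out the algebraic absorption of $-\lambda_{f(\lambda)}^i$ into the sum, which the paper leaves implicit.
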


\textit{Proof:} First, note that $n\lambda_{\max} - \sum_{i \in \mathcal{I}} \lambda_x^i \geq 0$ for any $x$ and any $\lambda$. Thus, $u(f(\lambda),p_i(\lambda);\lambda^i) \geq 0$. Next, note that since $\lambda_{x_0}^i \geq 0$ for all $i$ by assumption, we have $u(x_0,0; \lambda^i) \leq 0$. The inequality in Equation~\ref{eq:ex_post_ir} follows.

In the smartSDH experiment, $h_i(\widehat \lambda^i) = n\lambda_{max}$ happens to be independent of the types of the users. It can be thought of as the basic wage that every user is given for participating in the study.

Lastly, we outline the last modification needed to implement a VCG mechanism in our office space setting. Users can come and go at any point in time, and they can modify their reported preferences at any point in time as well. The modification needed from the classical VCG mechanism applications is the temporal aspect of this. 

Essentially, this modification is quite simple: transform all the quantities discussed above into rates. Rather than interpreting $\lambda_x^i$ as the cost user $i$ incurs when the outcome is $x$, we interpret $\lambda_x^i$ as the cost per hour user $i$ incurs. Similarly, if the VCG mechanism decides to pay user $i$ a payment of $p_i$, they are paid $p_i$ points per hour. Then, any time a user enters, leaves, or modifies their vote, we treat that segment as one round of the mechanism, weighted by its duration.

More formally, suppose the set of users logged on and their reported preferences $\widehat \lambda$ are constant on the time interval $[t_0,t_1]$. Then, in that time interval, our mechanism chooses the outcome $f(\widehat \lambda)$. Each user $i$ who is logged on in that time interval receives a reward $(t_1 - t_0) p_i(\widehat \lambda)$ for their participation during the time interval $[t_0,t_1]$. 

\subsection{Contributions and goal of the investigation}

The goal of the present investigation was to test the application of the modified VCG mechanism to a smart building control application, dubbed ``smartSDH'' and to experimentally verify its utility. As mentioned previously, a VCG mechanism selects the socially optimal outcome among a set of possible outcomes -- in this case the preferred brightness of overhead office lights -- and then issues payments to users based on the decisions of all other users. In this sense, it is a mechanism where being truthful is the best strategy for the individual and the group. Although VCG mechanisms have many desirable theoretical properties when all participants act as rational agents, real-world application of VCG mechanisms are limited.  

This study analyzes perceptions and behaviors of users in a shared office space who interacted with a VCG-operated smartSDH over a 5 month period. Participants interacted with smartSDH via a web portal, where they reported their preference for the brightness of the overhead office lights in real time so the VCG mechanism could determine the socially optimal light setting for the group. Users were allotted points according to the modified VCG mechanism. Whenever the total points earned crossed a threshold, a lottery was held for gift cards; over the 5-month period, we rewarded a total of \$2,900 worth in gift cards. Additionally, to emphasize the non-competitive aspect of sharing an office space, whenever the total points earned crossed another threshold, we hosted a catered lunch for all the participants of smartSDH. Because adaptation to technology often evolves over time~\cite{desanctis1994}, sometimes gradually and sometimes sporadically, we chose to evaluate the behavioral outcomes over the course of three distinct time periods (T1 = Wk1 - Wk7; T2 = Wk 8 - Wk15; T3 = Wk16 - Wk 22). Upon observation of data, the temporal changes were significant across periods of 8 weeks, hence the analysis breaks down the time period of data analysis to 8 weeks with no gaps in between. In such a setting, three research questions become salient:

\begin{itemize}

\item
\textbf{RQ1.} What is the influence of using the VCG-operated smartSDH on light and incentive satisfaction across time? How does an average user's lighting preference vary across time and in response to the incentives?

\item
\textbf{RQ2.} What is the influence of using the VCG-operated smartSDH on energy saving?

\item
\textbf{RQ3.} Is it possible to determine user's preferences from their immediate environment to facilitate smart lighting without user intervention? What is the relationship between light level preferences and atmospheric conditions including humidity, temperature, pressure, and solar radiation?

\end{itemize}

\section{Method}
\label{sec:method}

\subsection{Participants}

Twenty-seven undergraduate students from a public research university in California, USA participated in the longitudinal study in exchange for prize items, which were allotted based on their performance in the point based system. The preliminary survey was optional, and out of the 27 participants, 18 responded. The survey asked the students about what they considered to be their ideal incentives (gift cards vs fitbits, etc), the duration of their commute, modes of transportation they resorted to, time of arrival to the office, etc. The majority of users were men (gender: 73\% men, 22\% women, 5\% prefer not to respond) with 50\% of users between the ages of 22-25, while the rest were above 26 years old. The majority of users had incomes between \$35,000 and \$40,000, and the rest had lower incomes. Before commencing the experiment, users were somewhat satisfied with the light conditions in the office \((M = 4.39\), \(SD = 1.14)\) on a 5-point scale. As for ideal incentives, nearly everyone preferred gift cards from Amazon, iTunes, and Google Play (94\%) over the other options, which were complimentary vouchers for drinks at a campus coffeeshop, or lotteries for big-ticket items such as Apple Watches, Fitbits, and EarPods.

\subsection{Procedure}
Participants in the longitudinal study interacted with smartSDH in an open office space with cubicles on a university campus over a period of five months. The desks in the office space were divided into different lighting zones with a set of overhead lights serving as the primary source of illumination for each zone. The smartSDH operated independently for each zone. From the work hours of 9 AM to 5 PM local time, participants were only able to control their office lights through the smartSDH web portal. After work hours, the light switches returned to normal operation. 

Each participant had access to the password-protected smartSDH web portal. In addition to allowing users to vote for their light settings in real time, the web portal also provided participants with visualizations of the state of the office space. Users could view their personal point totals, the currently implemented light setting, and the global progress to the individual and communal incentive thresholds. They could also see which occupants were present in their zone. Users were able to monitor their floor's lighting lumen level and temperature in real-time, with a refresh interval of one second. A view of the designed portal can be seen in \ref{fig:portal}.
\begin{figure}[h]
    \centering
     \label{fig:portal}
    \includegraphics[width=0.47\textwidth, height=0.3\textwidth]{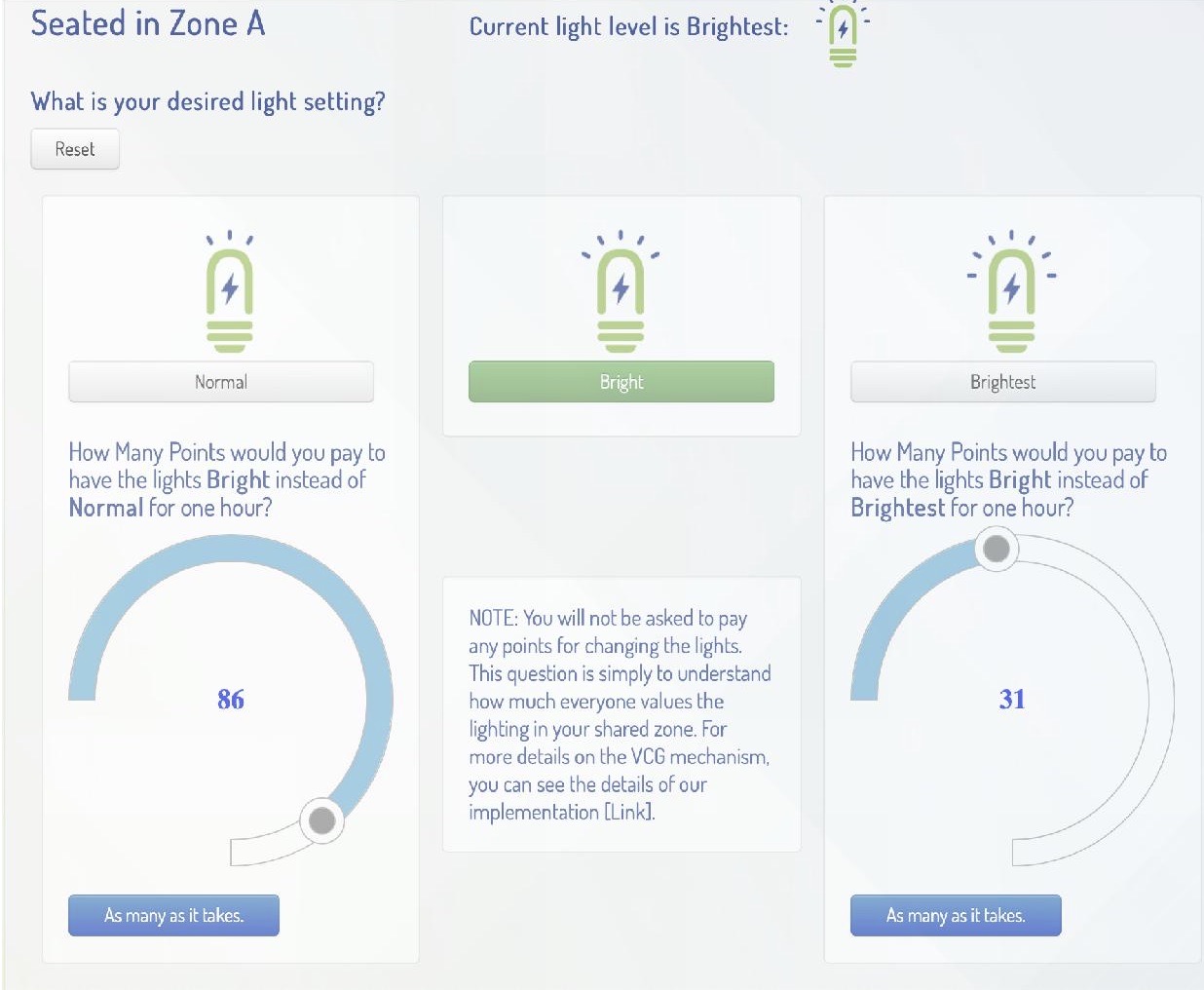}
    \caption[Voting portal for smartSDH]{The voting portal for smartSDH, where users log in and report their preferences on how many points they would pay to change a light setting from the currently implemented setting.}
\end{figure}
To adjust the light conditions, participants first selected their preferred light setting from three available options: `Normal', `Bright', `Very Bright', which corresponds to 33\%, 67\%, and 100\% of the maximum possible illumination. Once participants selected their preferred light setting (e.g.,  `Bright'), the web portal requested follow-up information about the two settings not chosen (e.g., `Normal' and `Very Bright'). Specifically, the web portal asked participants to indicate the extent to which they preferred their chosen light setting over each of the alternative settings (e.g., ``How many points are you willing to pay to have the lights set to BRIGHT instead of NORMAL for one hour?'') from 0 to 100 points (see \ref{fig:portal} for illustrations of the voting procedure). These two measures---preferred light setting and relative preference---contributed to the VCG mechanism's chosen outcome. Every time a user logged on, logged off, or changed their reported preference, smartSDH calculated \(f(\hat \lambda)\) based on the reported values of all users who were currently logged on, and rewards each user \(i\)'s account with points \(p_i(\hat \lambda)\), according to the calculations in Section 3. Users were able to modify their reported preferences at any time, however they were required to be present in the office to vote. Office presence was enforced via the browser's geolocation data.

In addition to the points earned from the VCG mechanism, we also rewarded users for completing a repeatable survey about their experiences with smartSDH. This repeatable survey included Likert-scale questions about the incentives provided, the design of the web portal, the level of comfort participants experienced, their satisfaction with the current light setting, the level of awareness about energy-saving actions they could take, and their productivity in the office.

\subsection{Rewards}

The points were converted to values by the users in two ways throughout the study. First, whenever the total points earned by all participants exceeded a threshold, we held a lottery for gift cards. The probability of one winning the lottery was proportional to the number of points in one's account, and multiple gift cards were given each time the lottery threshold was reached. This was the individual incentive. Second, their points built toward a communal incentive. Whenever the total points earned by all participants exceeded the communal incentive threshold, we provided a catered lunch to participants.

\subsection{Apparatus}

The users interacted with a web portal where they submitted their preferences for the overhead lighting in the office. The Building Automation and Control (BACnet) protocol was used for communications between the web portal interface and the light actuators in the space. We used BACNet because it is the prominent protocol for HVAC applications. 

\subsection{Measures}

\begin{enumerate}

    \item \textbf{Light setting preference.} Participants selected one of three available settings: (`Normal', `Bright', `Very Bright'), which corresponds to 33\%, 67\%, and 100\% of the maximum possible illumination.
    \item \textbf{Relative preference.} For the two settings not chosen, the web portal asked ``How many points are you willing to pay to have the lights set to PREFERRED instead of ALTERNATIVE for one hour?'' For each of the two alternatives, users provided an integer between 0 and 100. We assumed \(\lambda_{\text{max}} = 100\), where \(\lambda_{\text{max}}\) is as defined in \ref{ass:bounded}. As a quality-of-life feature, we also included a button that allowed users to set their vote to \(\lambda_\text{{max}}\) with one click. 
    \item \textbf{Light level satisfaction.} Two items measured participants' satisfaction with their lighting conditions on a given day, ``I am satisfied with today's lighting conditions,'' and ``Today's lighting conditions were uncomfortable.'' Responses were recorded on a five-point Likert-type scale from 1 (strongly disagree) to 5 (strongly agree) ($\alpha$ = .91). 
    \item \textbf{Incentive satisfaction.} Two items measured participants' satisfaction with the incentives provided on a given day, ``I am happy with the current incentives provided,'' and ``The current incentives are not satisfactory.'' Responses were recorded on a five-point Likert-type scale from 1 (strongly disagree) to 5 (strongly agree) ($\alpha$ = .86).
    \item \textbf{User interface satisfaction.} Two items measured participants' satisfaction with the web portal used to manipulate the smart lights on a given day, ``I am satisfied with the current web interface,'' and ``The web portal leaves much to be desired.'' Responses were recorded on a five-point Likert-type scale from 1 (strongly disagree) to 5 (strongly agree) ($\alpha$ = .78).
    \item \textbf{Energy consumption (\% savings/time).} We measured energy consumption in terms of percentage savings. That is, we calculated the implemented lighting over a baseline of 100\% lighting. This is depicted in Fig. \ref{fig:zonea}.
    \item \textbf{Humidity (\%)}. Atmospheric sensors in the office space measured relative humidity in percentage water vapor -- where 100\% corresponds to fully saturated air at dewpoint.  Atmospheric sensors placed in the room.
    \item \textbf{Temperature (deg F)}. The above-mentioned sensors measured the temperature. Another atmospheric condition captured by the aforementioned sensors.
    \item \textbf{Pressure (Hg)}. The sensors measured barometric pressure in terms of units of mercury (Hg). More precisely, a unit of Hg denotes the pressure exerted by a column of mercury 1 inch (25.4 mm) in height at the standard acceleration of gravity. Another atmospheric condition captured.
    \item \textbf{Solar radiation (\(W/m^2\))}. Also known as solar irradiance, and it is measured in terms of power per unit area (Watts per square meter, in this case).

\end{enumerate}

\section{Results}
\label{sec:results}
\begin{center}
\begin{figure}[h!]
    \includegraphics[width=0.52\textwidth, height=0.35\textwidth]{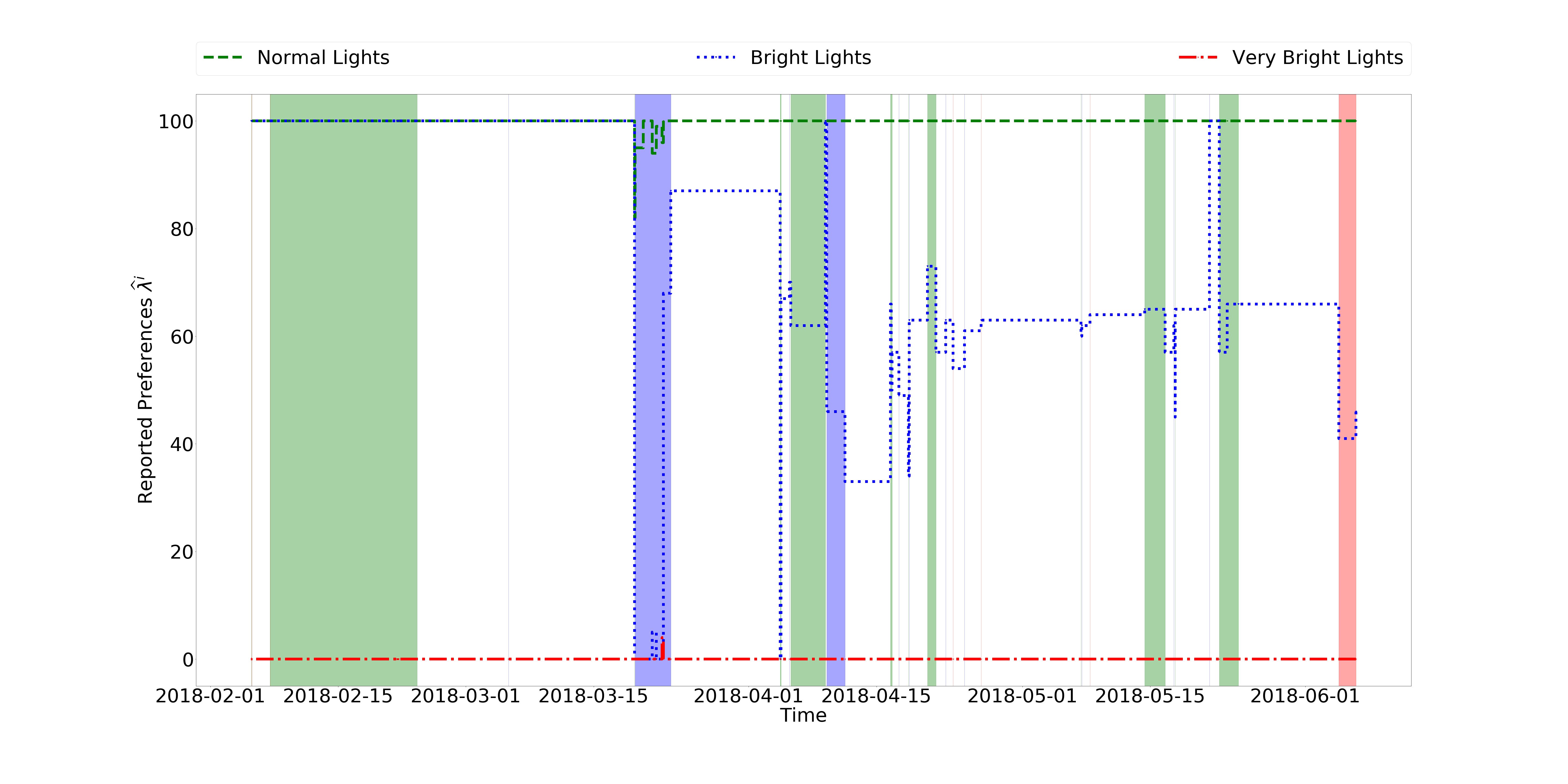}
    \vspace{-8mm}
    \caption[Voting history for a player across all lighting options for the entire gaming period.]{Voting history for a player across all lighting options for the entire gaming period. The unshaded rectangular regions in the background represent the player's absence. The green region indicates normal setting, purple indicates bright and red indicates very bright setting.}
    \label{fig:vcg_choices}
\end{figure}    
\end{center}
\vspace{-5mm}
Fig. \ref{fig:vcg_choices} shows the voting behavior of a user throughout the five-month period of the game. The y-axis, Reported Preferences, is the number of points the user is willing to pay to change the lights when it is at that setting. For instance, in this case, the user is willing to pay nothing to change the ``Very Bright Lights'' setting because it is their most preferred setting and would prefer that it remains at that setting. The shaded regions in the background show the implemented light setting during that time; blank regions indicate that the user was not logged in at the time or actively voting. In this instance, the user initially signals that they do not favor the dimmest light settings and medium light settings and the VCG mechanism implements the dimmest light. As time evolves, the user realizes they are more willing to accept the medium light setting since compromise has proven to output a brighter light overall, as seen in the short instances of blue and, eventually, red shading in the background.

\subsection{Effect of smartSDH on light level and incentive satisfaction across time (RQ1)}
 
 Satisfaction with the light levels across time while controlling for user interface satisfaction was analyzed using a 3 (T1 vs. T2. vs. T3) within-subjects ANOVA. We observed a significant effect of time on light level satisfaction after controlling for user interface satisfaction, \(F(2, 151)\) = 4.21, \(p = .017\), \(d = .217\). Participants reported greater satisfaction with light levels at T1 \((M = 3.44\), \(SD = 1.03)\) than T2 \((M = 2.94\), \(SD = 1.11\), \(p = .039)\) and T3 \((M = 2.96\), \(SD = 1.31\), \(p = .018)\). These results suggest that after an initial period of satisfaction with the light levels, participants became less satisfied with light levels (Fig. \ref{fig:time}). 
 
 \begin{center}
     \begin{figure}[h]
    \centering
    \includegraphics[scale=0.5]{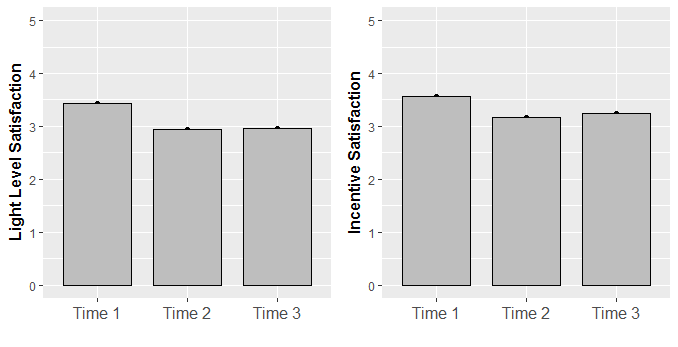}
    \caption{Effect of smartLight on light level and incentive satisfaction across time (N=157).}
    \label{fig:time}
\end{figure}
 \end{center}
\vspace{-5mm} 
Satisfaction with incentives across time while controlling for user interface satisfaction also was analyzed using a 3 (T1 vs. T2. vs. T3) within-subjects ANOVA. We observed a significant effect of time on incentive satisfaction after controlling for user interface satisfaction, \(F(2, 151) = 3.70\), \(p = .027\), \(d = .171\). Participants reported greater satisfaction with incentives at T1 \((M = 3.57\), \(SD = .82)\) than T2 \((M = 3.16\), \(SD = 1.11\), \(p = .001)\) and T3 \((M = 3.24\), \(SD = 1.37\), \(p = .001)\). Despite the theoretical properties of VCG mechanisms, the experimental data suggest the VCG mechanism does not have an appreciable effect on light level and incentive satisfaction of users, shown in (Fig. \ref{fig:time}). 

\vspace{-3mm}
\subsection{Energy saving during smartSDH use over 5-month period (RQ2)}

To evaluate the effect of the VCG-operated smartSDH on energy consumption, we compared the intensity of the overhead lights in the office space over the 5-month period of smartSDH use to the intensity of the overhead light during normal operation (i.e., 100\% of the maximum possible illumination). Our results indicate that employing smartSDH in the office from 9AM to 5PM reduced energy consumption by 35.22\% over the 5 month study period. 

\begin{figure}[h!]
    \centering
    \includegraphics[scale=0.18]{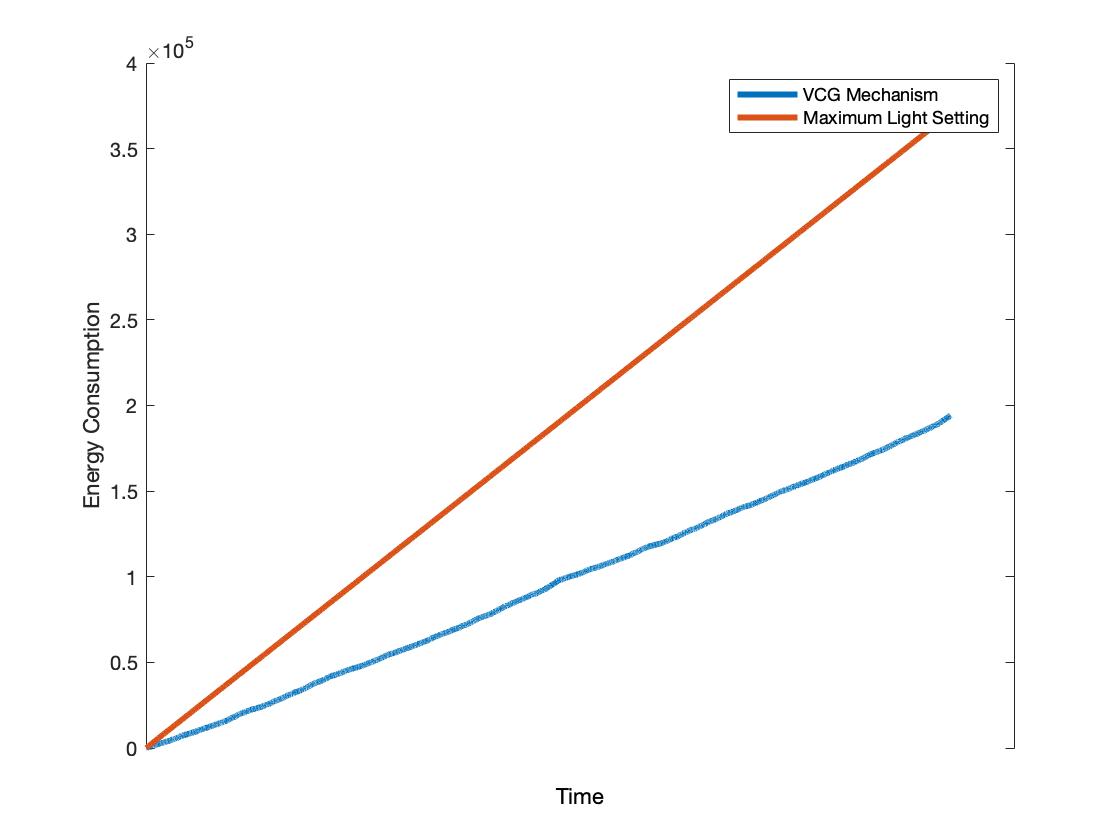}
    \caption{Light energy consumption across time in units used by VCG (100 for brightest, 67 for bright and 33 for normal) compared with maximum light setting across the time period.}
    \label{fig:zonea}
\end{figure}

\subsection{Relation between reported light setting preference and atmospheric conditions (RQ3)}

\begin{table}[htpb]
\caption{Descriptive statistics and correlations for atmospheric variables on reported light level preferences}
\begin{center}
\begin{tabular}{|c|c|c|c|c|}
\hline
\toprule
 & \textit{Mean} & \textit{SD} & \textit{r} & \textit{p} \\
 \midrule
Light setting preference & 61.81         & 27.84       & 1.00       &            \\ 
Temperature (deg F)      & 55.00         & 5.17        & .017       & .005       \\ 
Pressure (Hg)            & 29.62         & 0.11        & .06        & .292       \\
Solar radiation (\(W/m^2\))   & 249.21        & 370.72      & .27        & .001       \\ 
Humidity (\%)            & 79.93         & 11.05       & -.18       & .003       \\
\bottomrule
\hline
\end{tabular}    
\end{center}

\label{tab:atmos}
\bigskip

\small\textit{Note}. \textit{r} represents the zero-order correlation with light level preference. \(p\) represents the corresponding \(p\)-value for each correlation. \(N = 276.\)
\end{table}

Table \ref{tab:atmos} presents means, standard deviations, and bivariate correlations for atmospheric conditions and reported light setting preference. We observed a significant relationship between participants' reported light setting preference and atmospheric data taken in real time mainly solar radiation and humidity. Pressure did not significantly correlate with reported light setting preference, p = .292. We take these results to suggest that mechanisms for building control may require users to constantly report their preferences, as it would be difficult to build estimators of these preferences from externally observable factors.

\section{Discussion}
\label{sec:discuss}

In this study, we implemented a modified VCG mechanism, smartSDH, to determine the brightness of overhead lights in a shared office space. The goal of smartSDH was to determine the light setting that maximized the sum of everyone's utility fairly. To do so, smartSDH would issue incentives as needed to promote truthful reporting of preferences, as well as ensure all users were better off with smartSDH than a nominal light setting along with no rewards. To this end, VCG mechanisms were an appropriate choice for satisfying many of our desiderata.

Our first research question dealt with the influence of smartSDH on light and incentive satisfaction. 
In theory, smartSDH should improve the satisfaction of participants with the lighting. Regarding the first hypothesis on measuring the influence of time on light and incentive satisfaction, quantitative results shown in section~\ref{sec:results} suggest that there is no appreciable increase in satisfaction. We see that in T1, there was greater satisfaction, on average, across all participants than compared to time periods T2 and T3. Additionally, participants were significantly more satisfied with incentives during T1; however, this satisfaction deteriorated in periods T2 and T3. This could be due to a variety of reasons. First, a few survey respondents stated that they were ``too busy'' to regularly report their preferences, which implies that the user interface for the portal was not as convenient as desired. So as time went on, people felt the task more burdensome resulting in a more sporadic voting pattern as time went on. 

Secondly, there is evidence in the literature about ``technology burnouts'' (see~\cite{armel2013disaggregation} and the references therein). These references discuss the issues with technology adoption, highlighting the fact that with the right system in place, it is possible for behavioral approaches such as the one in smartSDH to be much more effective in reducing energy consumption, which can improve the result for our second hypothesis on energy saving. This suggests that there was initially a novelty effect for our users, who were more active in the first time period. In the second and third time periods, users were generally less engaged and motivated to continue voting or simply put, the users experienced a technology burnout. In general, if the user population was more aware of appliance information (in our case, about the lights' energy consumption), it would facilitate greater energy savings. For future work, this suggests more testing on how the user interface is perceived to obtain better insights for design decisions as well as increasing users' awareness on the mechanics of the sensors. If the technology burnout can be contained or even delayed, this methodology can be adopted for a longer period of time with adequate user satisfaction and savings on energy consumption. 

Our second research question dealt with the influence of smartSDH on energy consumption. 
We found that there is a significant reduction in energy usage. This is noteworthy because though the participants generally felt inconvenienced and dissatisfied with the implemented mechanism, their behaviors ended up effectively reducing overall energy usage. For future directions of the work, it might be useful to note that this framework readily allows one to incorporate incentives for energy-efficient behavior. For example, if a building costs $c_i \in \mathbb{R}$ dollars to operate at setting $x_i \in \mathcal{X}$ for one hour, we can introduce another participant of type $\lambda^0 = (c_1,c_2,\dots,c_m)$. This term now shows up in each participant's utility function through the VCG mechanism, and if everyone gains more than $c_i - c_j$ in utility from choosing $i$ over $j$, then the mechanism will choose $i$ over $j$. Each participant can then internalize the incentives, directly experience its benefits, and, as a group, decide whether or not the difference in cost warrants a change in settings. This would allow users to internalize the externality -- in other words, to shift the external cost to their own internally accrued cost. However, for our experiment itself, we did \emph{not} incorporate any energy-efficiency incentives; our primary goal was to find the light settings that were the most preferred by all the occupants in the office space. Moreover, technology adoption rates are generally higher for software solutions such as smart meters (analogous to the sensors used in smartSDH) since the hardware is generally installed by an external party such as the utility company. Since this has virtually no cost to the user, along with zero personal installation effort, there is much to be said about integrating smart software solutions to reduce energy consumption in an aggregate energy source.

Our last research question dealt with the relationship between the preferences of users and atmospheric conditions. 
Our results show that there is a significant relationship between participants' votes and factors such as solar radiation and humidity. However, there is not much qualitative information that we could obtain about why participants felt that these factors influenced them to vote in a particular manner. These qualitative aspects of the user's voting can be asked in a survey at the end of the experiment, which is future work when conducting such experiments. The data from this experiment suggests that estimating these preferences from environmental factors is not a viable avenue, because it indicates that mechanisms for smart building control may require occupants to consistently report their preferences. This is because these preferences cannot be accurately predicted from externally observable factors. In addition, the free-form qualitative survey also suggests that people experienced no appreciable increase in satisfaction as a result of the mechanism. We explore a few potential reasons for this, all of which are interesting avenues for future study.
 
We note that, due to limitations of this study, our geolocation methods required users to regularly sign on from their browser; otherwise, users could vote remotely or leave their computer on to earn points illicitly. 
We hope that future studies can explore methods to reduce the intrusiveness of these mechanisms.

Another possible reason for satisfaction levels not being significant might have been because $\lambda_{max}$, as defined in Assumption~\ref{ass:bounded}, was set too low. This again was due to the limitations of our study. Our value for $\lambda_{max}$ was determined by our experimental budget. That is, we set $h_i = n \lambda_{max}$ as a base rate for participation, and needed to bound how much money we would give out in rewards over the duration of the experiment. For smartSDH, our budget was over \$100 per participant. In practical applications, it may be beneficial to relax the requirement that all users are better off due to the mechanism. Under this relaxation, some users may wind up paying money into the mechanism, but it would still implement the socially optimal outcome. This ties back in with the idea of ``internalizing the externality'' that may be caused due to differences in preferences of lighting. To implement this, there must be a way to force participation against a user's will. 
We note that this is not likely to be viable in an academic research experiment. Alongside the above considerations, it is important to note that since VCG is generally employed in auctions, the users involved are assumed to be competitive. According to assumption \ref{ass:info_structure}, it is assumed that one a user knows their type and others are unaware of it. However, due to the nature of the experiment or the fact that the web interface shows the users in the office environment at any instant, the users might cooperate with one another in determining their types, which might once again have an impact on the optimality of the VCG mechanism-derived outcome.

Finally, we note that many qualitative aspects of our study suggest a status quo effect. Some participants told us during the catered events that they were happy with smartSDH because the lights were typically too bright. In contrast, some other participants insisted that no one would want lights so dim. Anecdotally, it seemed some participants were dissatisfied because they did not believe that the mechanism was implementing the social optimum. It is interesting to note how it is more socially acceptable to walk into a full office and turn up the lights than it is to walk into a full office and turn down the lights; these social contexts likely had a factor in the experience of participants of this study. We think examining the effect of the status quo when new IoT technologies are deployed in these settings is a very interesting direction for future research.

\subsection{Concluding Remarks}

In summary, the contributions of this paper are as follows:

\begin{itemize}

\item The implementation of modified VCG mechanism to determine the brightness of the overhead lights for 27 participants over a 5-month period in their actual office space, giving out \$2,900 in rewards. We emphasize that it is challenging to implement the VCG mechanism outside of typical auction settings (see \ref{sec:mec_des}) and most of the work in the realm of VCG based mechanisms have been strictly theoretical and simulation based.
\item Studying user satisfaction with incentives and light level and the impact of the mechanism on energy savings and consumption.

\item Determining correlations between light setting preferences and atmospheric conditions. Although some factors had significant correlations with the reported preferences, the predictive value was generally very poor. This implies that mechanisms for building control may require users to constantly report their preferences, as it would be difficult to build estimators of these preferences from externally observable factors.

\item Outlining some barriers to the implementation of VCG mechanisms in IoT settings, and potential reasons our study did not achieve the expected gains in satisfaction.

\end{itemize}

Despite the theoretical properties of VCG mechanisms, the experimental data suggests that the VCG mechanism did not have an appreciable effect on the satisfaction or awareness of users in any of these user-perception categories. Our data suggests that rational agent models may require some modifications to capture how humans typically will interact with an IoT technology in the background of their work life. In particular, the preferences reported show some temporal variation that may be due to users learning their own preferences, which is in contrast with the typical rational agent model. Furthermore, there is a need to design the user experience to be as minimally intrusive as possible, and potentially relax the requirement that all users are better off with the existence of the mechanism.

\section*{Acknowledgements}
Ioannis C. Konstantakopoulos was a Scholar of the Alexander S. Onassis Public Benefit Foundation. 
The authors would like to thank Chris Hsu, the applications programmer at CREST Laboratory, UC Berkeley, who developed and deployed the web portal application as well as the social game data pipeline architecture. We would like to thank Andrea Bajcsy for many helpful discussions on best practices for experiments with human behavior. We would also like to thank Claire Tomlin, without whom this experiment may never have been deployed. Additionally, we want to thank UC Berkeley Student Technology Fund for the Award of Student Technology Fund Initiative -- Large-Scale Project (2018). Finally, we would like to thank the participants of smartSDH.

{\footnotesize
\bibliographystyle{ieeetr}
\bibliography{bare_jrnlr.bib}

\begin{thebibliography}{10}

\bibitem{shah2016survey}
S.~Shah and I.~Yaqoob, ``A survey: Internet of things (iot) technologies,
  applications and challenges.,'' in {\em 2016 4th IEEE International
  Conference on Smart Energy Grid Engineering, SEGE 2016}, no.~2016 4th IEEE
  International Conference on Smart Energy Grid Engineering, SEGE 2016,
  (Department of Computer Science, Bahria University), pp.~381--385, 2016.

\bibitem{vickrey:1961aa}
W.~VICKREY, ``Counterspeculation, auctions, and competitive sealed tenders.,''
  {\em Journal of Finance (Wiley-Blackwell)}, vol.~16, no.~1, pp.~8 -- 37,
  1961.

\bibitem{groves:1973aa}
G.~Theodore, ``Incentives in teams.,'' {\em Econometrica}, vol.~41, no.~4,
  pp.~617 -- 631, 1973.

\bibitem{Nisan2007}
``Algorithmic game theory.,'' 2007.

\bibitem{mcquade2009}
J.~McQuade, ``A system approach to high performance buildings,'' {\em United
  Technologies Corporation, Tech. Rep}, 2009.

\bibitem{maAnderson2011ACC}
M.~Yudong, G.~Anderson, and F.~Borrelli, ``A distributed predictive control
  approach to building temperature regulation.,'' {\em Proceedings of the 2011
  American Control Conference, American Control Conference (ACC), 2011},
  pp.~2089 -- 2094, 2011.

\bibitem{Aswani:2012kx}
A.~Aswani, N.~Master, J.~Taneja, V.~Smith, A.~Krioukov, D.~Culler, and
  C.~Tomlin, ``Identifying models of hvac systems using semiparametric
  regression,'' pp.~3675--3680, 2012.
\newblock cited By 46.

\bibitem{karmann2018percentage}
C.~Karmann, S.~Schiavon, and E.~Arens, ``Percentage of commercial buildings
  showing at least 80
  pp.~48--54, 2018.
\newblock cited By 12.

\bibitem{frontczak2012quantitative}
M.~Frontczak, S.~Schiavon, J.~Goins, E.~Arens, H.~Zhang, and P.~Wargocki,
  ``Quantitative relationships between occupant satisfaction and satisfaction
  aspects of indoor environmental quality and building design.,'' {\em Indoor
  Air}, vol.~22, no.~2, pp.~119 -- 131, 2012.

\bibitem{baron1992effects}
R.~Baron, M.~Rea, and S.~Daniels, ``Effects of indoor lighting (illuminance and
  spectral distribution) on the performance of cognitive tasks and
  interpersonal behaviors: The potential mediating role of positive affect.,''
  {\em Motivation and Emotion}, vol.~16, no.~1, pp.~1--33, 1992.

\bibitem{ryherd2008implications}
E.~Ryherd and L.~Wang, ``Implications of human performance and perception under
  tonal noise conditions on indoor noise criteria.,'' {\em Journal of the
  Acoustical Society of America}, vol.~124, no.~1, pp.~218--226, 2008.

\bibitem{sundell2011ventilation}
J.~Sundell, H.~Levin, W.~W. Nazaroff, W.~S. Cain, W.~J. Fisk, D.~T. Grimsrud,
  F.~Gyntelberg, Y.~Li, A.~K. Persily, A.~C. Pickering, J.~M. Samet, J.~D.
  Spengler, S.~T. Taylor, and C.~J. Weschler, ``Ventilation rates and health:
  multidisciplinary review of the scientific literature.,'' {\em Indoor Air},
  vol.~21, no.~3, pp.~191 -- 204, 2011.

\bibitem{jin2018automated}
M.~Jin, C.~Spanos, S.~Liu, and S.~Schiavon, ``Automated mobile sensing: Towards
  high-granularity agile indoor environmental quality monitoring.,'' {\em
  Building and Environment}, vol.~127, pp.~268--276, 2018.

\bibitem{jia2016privacy}
R.~Jia, R.~Dong, S.~S. Sastry, and C.~J. Sapnos, ``Privacy-enhanced
  architecture for occupancy-based hvac control.,'' {\em 2017 ACM/IEEE 8th
  International Conference on Cyber-Physical Systems (ICCPS), Cyber-Physical
  Systems (ICCPS), 2017 ACM/IEEE 8th International Conference on, ICCPS},
  pp.~177 -- 186, 2017.

\bibitem{wargocki2008indoor}
P.~Wargocki, O.~Seppanen, J.~Andersson, A.~Boestra, D.~Clements-Croome,
  K.~Fitzner, and S.~Hanssen, {\em Indoor climate and productivity in offices.
  How to integrate productivity in life cycle costs analysis of building
  services}.
\newblock SHASE, 2008.

\bibitem{Nezami2019ontology}
Z.~Nezami, K.~Zamanifar, D.~Arena, and D.~Kiritsis, {\em Ontology-Based
  Resource Allocation for Internet of Things.}, vol.~566 of {\em IFIP Advances
  in Information and Communication Technology}.
\newblock (1)University of Isfahan (UI): Springer New York LLC, 2019.

\bibitem{Seydoux2016ontology}
N.~Seydoux, N.~Hernandez, K.~Drira, and T.~Monteil, {\em Iot-O, a core-domain
  IoT ontology to represent connected devices networks.}, vol.~10024 LNAI of
  {\em Lecture Notes in Computer Science}.
\newblock IRIT Maison de la Recherche, University of Toulouse Jean Jaurès:
  Springer Verlag, 2016.

\bibitem{Gupta2018Article}
S.~Gupta, K.~Kar, S.~Mishra, and J.~Wen, ``Incentive-based mechanism for
  truthful occupant comfort feedback in human-in-the-loop building thermal
  management.,'' {\em IEEE Systems Journal, Systems Journal, IEEE}, vol.~12,
  no.~4, pp.~3725 -- 3736, 2018.

\bibitem{boman:1998aa}
P.~Davidsson and M.~Boman, ``A multi-agent system for controlling intelligent
  buildings.,'' {\em Proceedings Fourth International Conference on MultiAgent
  Systems, MultiAgent Systems, 2000. Proceedings. Fourth International
  Conference on}, pp.~377 -- 378, 2000.

\bibitem{clearwater1995auction}
C.~Clearwater, Scott H.~(Woodside, ``Auction-based control system for energy
  resource management in a building.,'' 1995.

\bibitem{Ploennigs2017FromSM}
J.~Ploennigs and A.~Schumann, ``From semantic models to cognitive buildings.,''
  in {\em 31st AAAI Conference on Artificial Intelligence, AAAI 2017}, no.~31st
  AAAI Conference on Artificial Intelligence, AAAI 2017, ((1)IBM Research -
  Dublin, Damastown Industrial Estate), pp.~5105--5106, 2017.

\bibitem{egedorf2017adverse}
S.~Egedorf and H.~R. Shaker, ``Adverse condition and critical event prediction
  in cranfield multiphase flow facility.,'' {\em 2017 International Conference
  on Sensing, Diagnostics, Prognostics, and Control (SDPC), Sensing,
  Diagnostics, Prognostics, and Control (SDPC), 2017 International Conference
  on, SDPC}, pp.~557 -- 564, 2017.

\bibitem{wang2008supervisory}
W.~Shengwei and M.~Zhenjun, ``Supervisory and optimal control of building hvac
  systems: A review.,'' {\em HVAC\&R Research}, vol.~14, no.~1, pp.~3 -- 32,
  2008.

\bibitem{chen2009design}
C.~Han, P.~Chou, S.~Duri, L.~Hui, and J.~Reason, ``The design and
  implementation of a smart building control system.,'' {\em 2009 IEEE
  International Conference on e-Business Engineering, e-Business Engineering,
  2009. ICEBE '09. IEEE International Conference on}, pp.~255 -- 262, 2009.

\bibitem{Juan2019User}
J.~C. Augusto and A.~Muñoz, ``User preferences in intelligent environments.,''
  {\em Applied Artificial Intelligence}, vol.~33, no.~12, pp.~1069 -- 1091,
  2019.

\bibitem{Teze2020User}
J.~Teze, S.~Gottifredi, A.~García, and G.~Simari, ``An approach to
  generalizing the handling of preferences in argumentation-based
  decision-making systems.,'' {\em Knowledge-Based Systems}, vol.~189, 2020.

\bibitem{xu2017efficient}
Y.~Xu and S.~Low, ``An efficient and incentive compatible mechanism for
  wholesale electricity markets.,'' {\em IEEE Transactions on Smart Grid, Smart
  Grid, IEEE Transactions on, IEEE Trans. Smart Grid}, vol.~8, no.~1, pp.~128
  -- 138, 2017.

\bibitem{samadi2011optimal}
P.~Samadi, R.~Schober, and V.~Wong, ``Optimal energy consumption scheduling
  using mechanism design for the future smart grid.,'' {\em 2011 IEEE
  International Conference on Smart Grid Communications (SmartGridComm), Smart
  Grid Communications (SmartGridComm), 2011 IEEE International Conference on},
  pp.~369 -- 374, 2011.

\bibitem{goel14mechanismdesign}
G.~Goel, A.~Nikzad, and A.~Singla, ``Allocating tasks to workers with matching
  constraints: Truthful mechanisms for crowdsourcing markets,'' pp.~279--280,
  2014.
\newblock cited By 26.

\bibitem{desanctis1994}
D.~Gerardine and P.~Marshall~Scott, ``Capturing the complexity in advanced
  technology use: Adaptive structuration theory.,'' {\em Organization Science},
  vol.~5, no.~2, pp.~121 -- 147, 1994.

\bibitem{armel2013disaggregation}
K.~C. Armel, A.~Gupta, G.~Shrimali, and A.~Albert, ``Is disaggregation the holy
  grail of energy efficiency? the case of electricity.,'' {\em Energy Policy},
  vol.~52, no.~1, pp.~213 -- 234, 2013.

\end{thebibliography}
}

\begin{IEEEbiography}[{\includegraphics[width=1.11in,height=1.25in,clip]{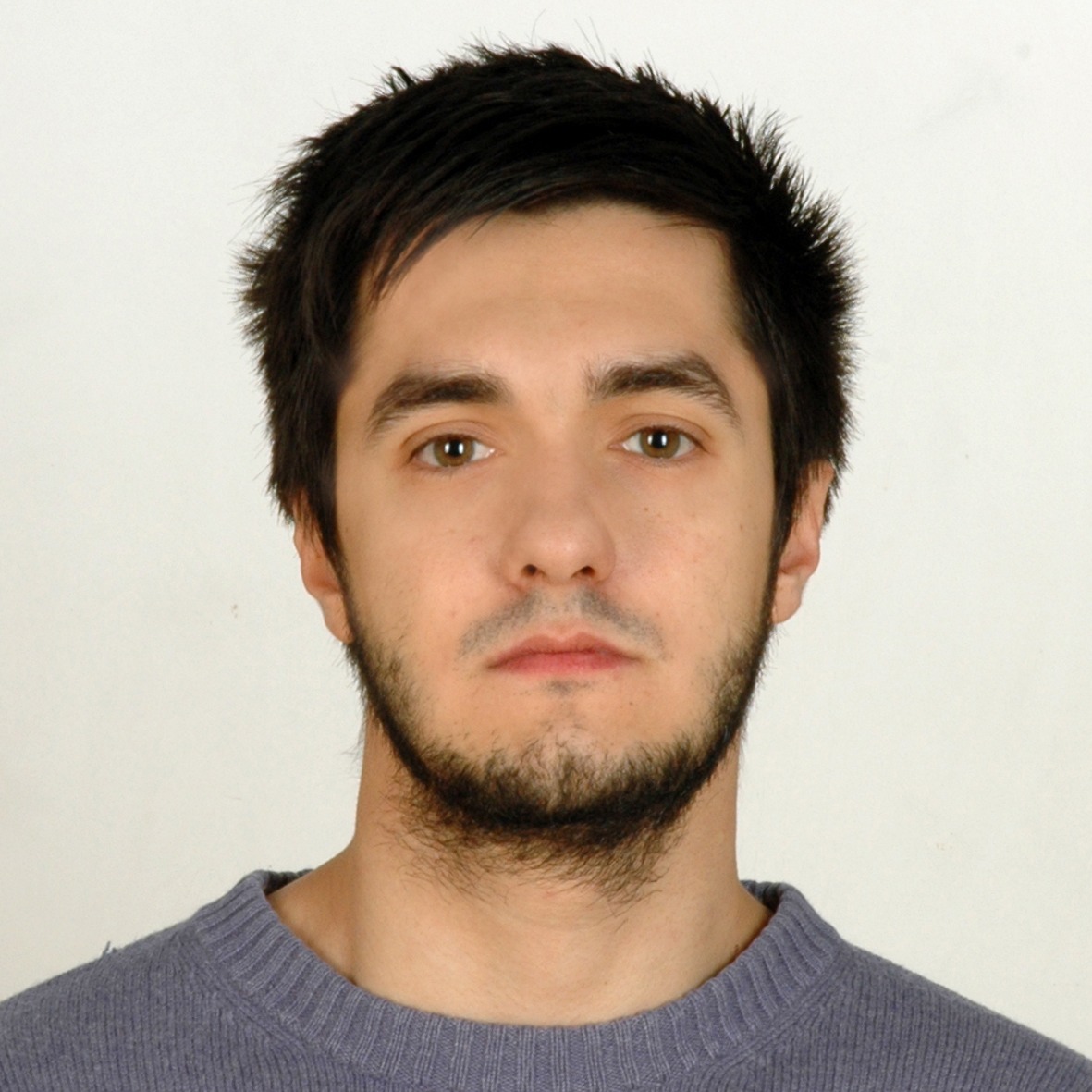}}]{Ioannis C. Konstantakopoulos}
Ioannis C. Konstantakopoulos received the Diploma (Hons.) in electrical and computer engineering from the University of Patras, Patras, Greece, in 2012 and the M.S./PhD degree in electrical engineering and computer sciences from the University of California, Berkeley, CA, USA, in 2014 and 2018 respectively. His current research interests include deep learning, statistical learning, machine learning, algorithmic game theory, and optimization. Ioannis C. Konstantakopoulos was a scholar of the Alexander S. Onassis Public Benefit Foundation providing financial support for outstanding Greek Doctoral students studying at US Institutions. Currently, he is an Applied Scientist at Alexa Artificial Intelligence team at Amazon.com, Inc.
\end{IEEEbiography}
\vspace{-10mm}
\begin{IEEEbiography}[{\includegraphics[width=1.12in,height=1.25in,clip,keepaspectratio]{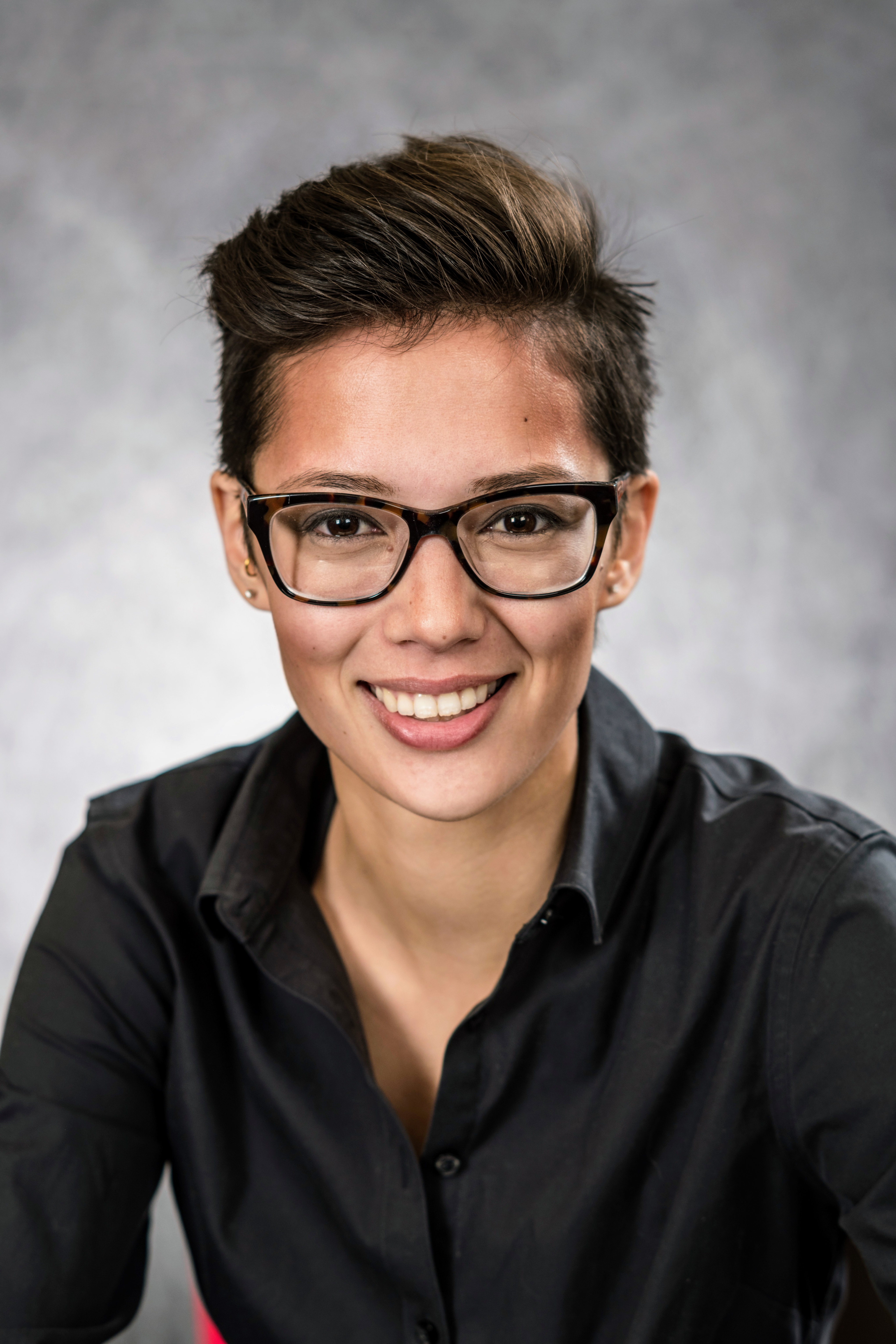}}]{Kristy A. Hamilton}
Kristy A. Hamilton is an Assistant Professor of Digital Media in the Department of Communication at University of California, Santa Barbara. Dr. Hamilton uses experimental methods from cognitive and social psychology to understand the inherent qualities or liabilities of human memory and cognition in a digital environment. She received her PhD from the Institute of Communication Research at the University of Illinois at Urbana-Champaign in 2020. She has published in journals such as Computers in Human Behavior, \emph{Journal of Applied Research in Memory and Cognition, Applied Cognitive Psychology}, and \emph{New Media \& Society}.
\end{IEEEbiography}


\begin{IEEEbiography}[{\includegraphics[width=1.12in,height=1.25in,clip,keepaspectratio]{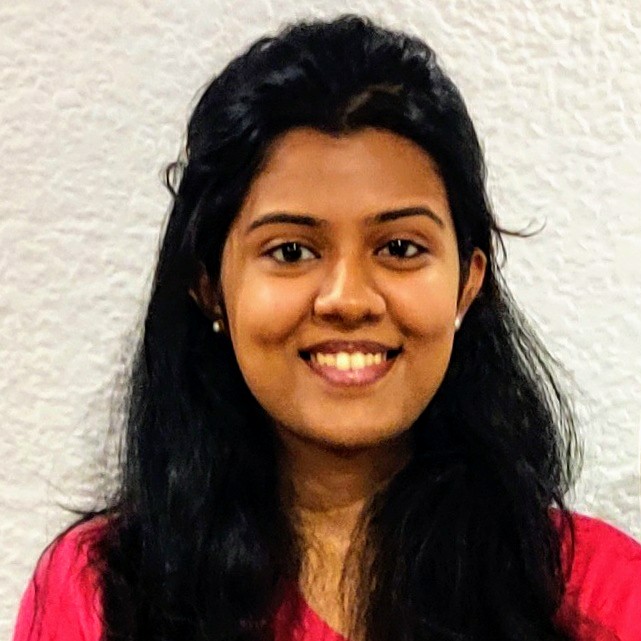}}]{Yashaswini Murthy}
Yashaswini Murthy is currently pursuing a PhD in Electrical Engineering at the University of Illinois, Urbana-Champaign. Her interests lie in the domain of control theory. She graduated with a Bachelor's and Master's of Technology in Mechanical Engineering with specialisation in Computer Aided Design and Automation and a Minor in Systems and Control Engineering from the Indian Institute of Technology, Bombay (IITB) in 2019.
\end{IEEEbiography}

\begin{IEEEbiography}[{\includegraphics[width=1.12in,height=1.25in,clip,keepaspectratio]{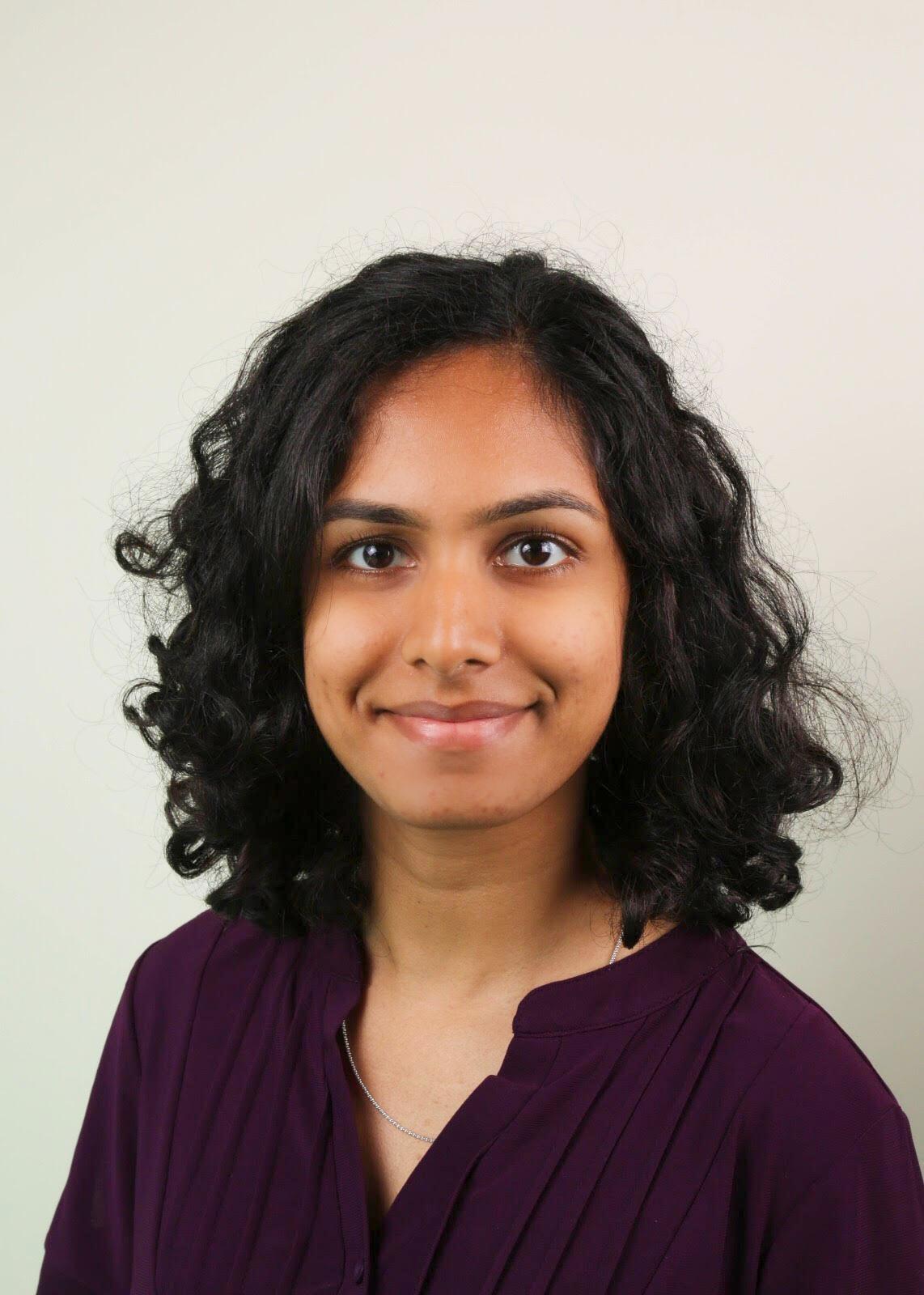}}]{Tanya Veeravalli}
Tanya Veeravalli is a graduate student at the University of Illinois, Urbana-Champaign, pursuing a Ph.D. in Electrical and Computer Engineering. She graduated from the University of California, Berkeley, receiving a B.A. (2019) with majors in Computer Science and Economics. Her interests lie in the areas of learning in dynamical systems and optimization.
\end{IEEEbiography}

\begin{IEEEbiography}[{\includegraphics[width=1.12in,height=1.25in,clip,keepaspectratio]{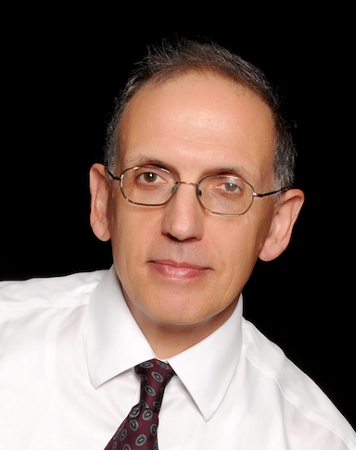}}]{Costas Spanos}
COSTAS J. SPANOS (Fellow, 2000) received the EE Diploma from the National Technical University of Athens, Greece, and the M.S. and Ph.D. degrees in ECE from Carnegie Mellon University. In 1988 he joined the department of Electrical Engineering and Computer Sciences (EECS) at the University of California, Berkeley, where he is now the Andrew S. Grove Distinguished Professor and the Director of the Center for Information Technology Research in the Interest of Society and the Banatao institute (CITRIS). He is also the Founding Director and CEO of the Berkeley Education Alliance for Research in Singapore (BEARS), and the Lead Investigator of a large research program on smart buildings based in California and Singapore. Prior to that, he has been the Chair of EECS at UC Berkeley, the Associate Dean for Research in the College of Engineering at UC Berkeley, and the Director of the UC Berkeley Microfabrication Laboratory. His research focuses on Sensing, Data Analytics, Modeling and Machine Learning, with broad applications in semiconductor technologies, and cyber-physical systems.
\end{IEEEbiography}

\begin{IEEEbiography}[{\includegraphics[width=1.12in,height=1.25in,clip,keepaspectratio]{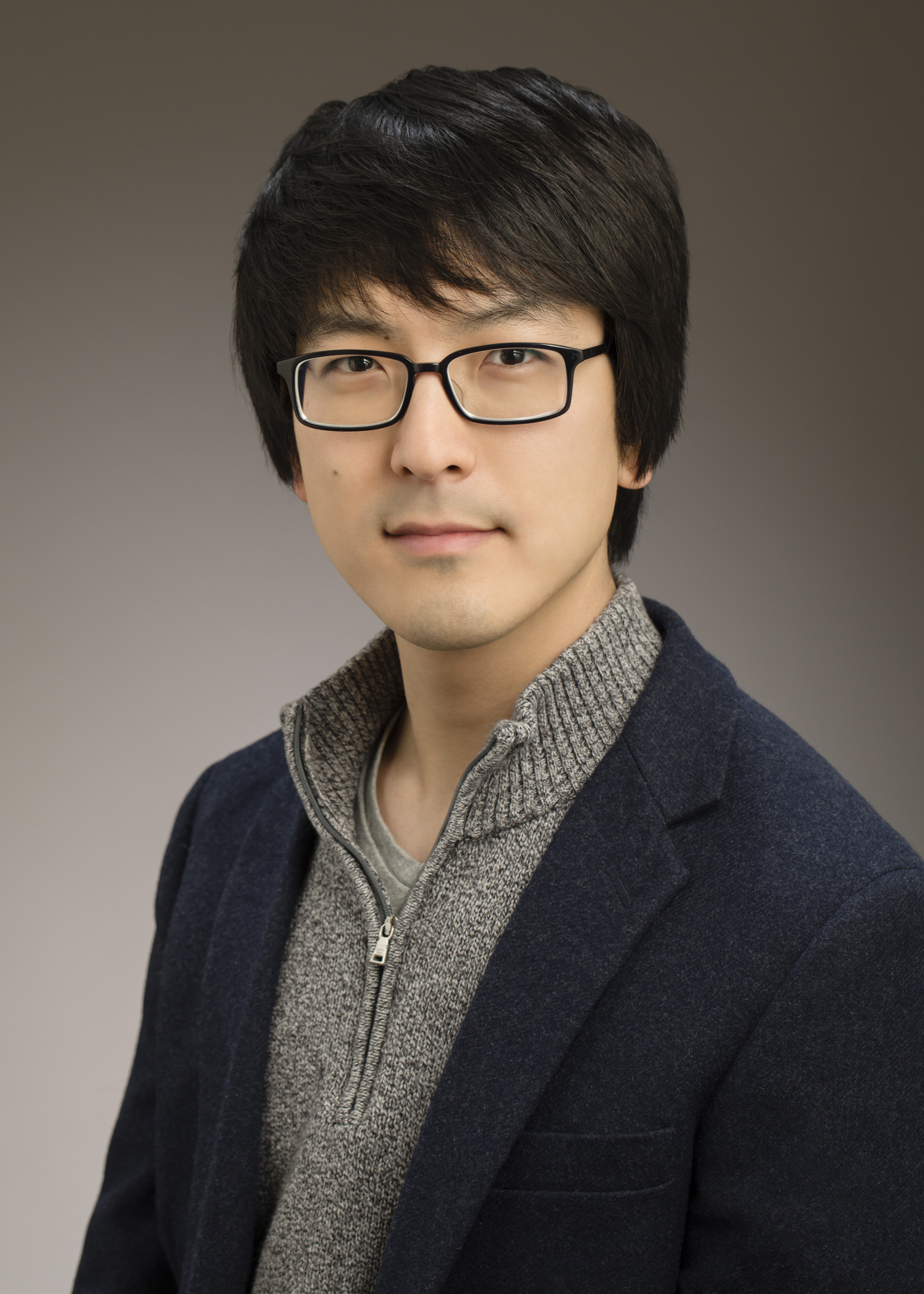}}]{Roy Dong}
Roy Dong is a Research Assistant Professor in the Electrical and Computer Engineering department at the University of Illinois at Urbana-Champaign. He received a BS Honors in Computer Engineering and a BS Honors in Economics from Michigan State University in 2010. He received a PhD in Electrical Engineering and Computer Sciences at the University of California, Berkeley in 2017, where he was funded in part by the NSF Graduate Research Fellowship. From 2017 to 2018, he was a postdoctoral researcher in the Berkeley Energy \& Climate Institute and a visiting lecturer in the Industrial Engineering and Operations Research department at UC Berkeley. His research uses tools from control theory, economics, statistics, and optimization to understand the closed-loop effects of machine learning, with applications in cyber-physical systems such as the smart grid, modern transportation networks,
\end{IEEEbiography}




\end{document}